\documentclass[journal]{IEEEtran}
%\documentclass[journal,12pt,draftcls,onecolumn]{IEEEtran}
%
% If IEEEtran.cls has not been installed into the LaTeX system files,
% manually specify the path to it like:
% \documentclass[journal]{../sty/IEEEtran}

% Some very useful LaTeX packages include:
% (uncomment the ones you want to load)

% *** MISC UTILITY PACKAGES ***
%
%\usepackage{ifpdf}
% Heiko Oberdiek's ifpdf.sty is very useful if you need conditional
% compilation based on whether the output is pdf or dvi.
% usage:
% \ifpdf
%   % pdf code
% \else
%   % dvi code
% \fi
% The latest version of ifpdf.sty can be obtained from:
% http://www.ctan.org/tex-archive/macros/latex/contrib/oberdiek/
% Also, note that IEEEtran.cls V1.7 and later provides a builtin
% \ifCLASSINFOpdf conditional that works the same way.
% When switching from latex to pdflatex and vice-versa, the compiler may
% have to be run twice to clear warning/error messages.

\usepackage{booktabs}

% *** CITATION PACKAGES ***
%
%\usepackage{cite}
% cite.sty was written by Donald Arseneau
% V1.6 and later of IEEEtran pre-defines the format of the cite.sty package
% \cite{} output to follow that of IEEE. Loading the cite package will
% result in citation numbers being automatically sorted and properly
% "compressed/ranged". e.g., [1], [9], [2], [7], [5], [6] without using
% cite.sty will become [1], [2], [5]--[7], [9] using cite.sty. cite.sty's
% \cite will automatically add leading space, if needed. Use cite.sty's
% noadjust option (cite.sty V3.8 and later) if you want to turn this off
% such as if a citation ever needs to be enclosed in parenthesis.
% cite.sty is already installed on most LaTeX systems. Be sure and use
% version 5.0 (2009-03-20) and later if using hyperref.sty.
% The latest version can be obtained at:
% http://www.ctan.org/tex-archive/macros/latex/contrib/cite/
% The documentation is contained in the cite.sty file itself.

% *** GRAPHICS RELATED PACKAGES ***
%
\ifCLASSINFOpdf
 \usepackage[pdftex]{graphicx}
 \usepackage{epstopdf}
  % declare the path(s) where your graphic files are
  % \graphicspath{{../pdf/}{../jpeg/}}
  % and their extensions so you won't have to specify these with
  % every instance of \includegraphics
\DeclareGraphicsExtensions{.pdf}
\else
  % or other class option (dvipsone, dvipdf, if not using dvips). graphicx
  % will default to the driver specified in the system graphics.cfg if no
  % driver is specified.
\usepackage[dvips]{graphicx}
  % declare the path(s) where your graphic files are
  % \graphicspath{{../eps/}}
  % and their extensions so you won't have to specify these with
  % every instance of \includegraphics
\DeclareGraphicsExtensions{.eps}
\fi
% graphicx was written by David Carlisle and Sebastian Rahtz. It is
% required if you want graphics, photos, etc. graphicx.sty is already
% installed on most LaTeX systems. The latest version and documentation
% can be obtained at: 
% http://www.ctan.org/tex-archive/macros/latex/required/graphics/
% Another good source of documentation is "Using Imported Graphics in
% LaTeX2e" by Keith Reckdahl which can be found at:
% http://www.ctan.org/tex-archive/info/epslatex/
%
% latex, and pdflatex in dvi mode, support graphics in encapsulated
% postscript (.eps) format. pdflatex in pdf mode supports graphics
% in .pdf, .jpeg, .png and .mps (metapost) formats. Users should ensure
% that all non-photo figures use a vector format (.eps, .pdf, .mps) and
% not a bitmapped formats (.jpeg, .png). IEEE frowns on bitmapped formats
% which can result in "jaggedy"/blurry rendering of lines and letters as
% well as large increases in file sizes.
%
% You can find documentation about the pdfTeX application at:
% http://www.tug.org/applications/pdftex

% *** MATH PACKAGES ***
%
\usepackage[cmex10]{amsmath}
\usepackage{amsthm,amssymb}
\usepackage{tabularx}
% A popular package from the American Mathematical Society that provides
% many useful and powerful commands for dealing with mathematics. If using
% it, be sure to load this package with the cmex10 option to ensure that
% only type 1 fonts will utilized at all point sizes. Without this option,
% it is possible that some math symbols, particularly those within
% footnotes, will be rendered in bitmap form which will result in a
% document that can not be IEEE Xplore compliant!
%
% Also, note that the amsmath package sets \interdisplaylinepenalty to 10000
% thus preventing page breaks from occurring within multiline equations. Use:
%\interdisplaylinepenalty=2500
% after loading amsmath to restore such page breaks as IEEEtran.cls normally
% does. amsmath.sty is already installed on most LaTeX systems. The latest
% version and documentation can be obtained at:
% http://www.ctan.org/tex-archive/macros/latex/required/amslatex/math/
\newtheorem{proposition}{Proposition}

% *** SPECIALIZED LIST PACKAGES ***
%
\usepackage{algorithm,algorithmic,multicol}
\hyphenation{op-tical net-works semi-conduc-tor}

\begin{document}
%
% paper title
% Titles are generally capitalized except for words such as a, an, and, as,
% at, but, by, for, in, nor, of, on, or, the, to and up, which are usually
% not capitalized unless they are the first or last word of the title.
% Linebreaks \\ can be used within to get better formatting as desired.
% Do not put math or special symbols in the title.
\title{Adaptive Windowing for ICI Mitigation in Doubly Selective Channels with Unknown Statistics}
%
%
% author names and IEEE memberships
% note positions of commas and nonbreaking spaces ( ~ ) LaTeX will not break
% a structure at a ~ so this keeps an author's name from being broken across
% two lines.
% use \thanks{} to gain access to the first footnote area
% a separate \thanks must be used for each paragraph as LaTeX2e's \thanks
% was not built to handle multiple paragraphs
%

\author{Evangelos~Vlachos
        and~Kostas~Berberidis,~\IEEEmembership{Senior Member,~IEEE}% <-this % stops a space
%\thanks{...}% <-this % stops a space
%\thanks{E.Vlachos and K.Berberidis are with University of Patras.}% <-this % stops a space
}

% make the title area
\maketitle

\begin{abstract}
In doubly selective channels, receiver windowing constitutes an effective technique for enhancing the banded structure of the frequency-domain channel matrix, and thus improving the effectiveness of a banded equalizer for intercarrier interference (ICI) mitigation. A common window design technique, which performs close to optimal, is based on the criterion of maximum average signal-to-interference-plus-noise ratio (SINR). The optimality of this technique has been verified for stationary channels with perfectly known statistics. However, in cases where this assumption does not hold, a near optimal performance can be achieved at the expense of high complexity cost. To overcome these limitations, an adaptive windowing technique is proposed that is able to track the optimal receiver window  offering low-complexity requirements. Through simulation experiments it has been verified that the proposed technique is able to adapt to the varying channel statistics with increased robustness to channel modeling errors.

\end{abstract}

%\begin{IEEEkeywords}
%\end{IEEEkeywords}

% The very first letter is a 2 line initial drop letter followed
% by the rest of the first word in caps.
% 
% form to use if the first word consists of a single letter:
% \IEEEPARstart{A}{demo} file is ....
% 
% form to use if you need the single drop letter followed by
% normal text (unknown if ever used by IEEE):
% \IEEEPARstart{A}{}demo file is ....
% 
% Some journals put the first two words in caps:
% \IEEEPARstart{T}{his demo} file is ....
% 
% Here we have the typical use of a "T" for an initial drop letter
% and "HIS" in caps to complete the first word.

\section{Introduction}
 
\IEEEPARstart{I}{n} orthogonal frequency division multiplexing (OFDM) systems with high levels of mobility, the experienced channels are usually both time- and frequency-selective (so-called \textit{doubly selective}) \cite{Stantchev00}. The temporal variations within one OFDM block corrupt the subchannels orthogonality, generating power leakage among the subcarriers, thus causing intercarrier interference (ICI) at the receiver. To mitigate the ICI effect, a block banded minimum-mean-square-error (MMSE) equalizer \cite{Hlawatsch11} can be employed, which exploits the special structure of the frequency-domain channel convolution matrix, via band matrix approximation \cite{Rugini06}, \cite{Jeon99}, \cite{Vlachos13}. However, this approximation results into severe performance loss, especially in regimes with high signal-to-noise ratio (SNR), \cite{Rugini07}.

An effective technique, which is able to enhance the performance of banded equalizers, is to perform a time-domain pre-filtering of the input signal (\textit{windowing}), in order to  enforce a banded structure of the matrix \cite{DSilva02},\cite{Schniter04},\cite{Rugini05}. Let us distinguish two categories for window design; the first one is based on a predefined window (e.g. Hamming) while the second one is based on the maximization of a signal-to-interference-plus-noise ratio (SINR) criterion (e.g. maximum SINR or maximum average SINR,  \cite{Schniter04}). Among these methods, the \textit{maximum SINR} exhibits the best performance, but it requires perfect knowledge of the channel impulse response (CIR) and high computational burden. On the other hand, the predefined windowing exhibits poor performance, especially at high SNR. In this letter, we focus on criteria based on the maximization of the SINR, and in particular, on the maximum average-SINR. This criterion been employed for stationary channels, with perfectly known channel statistics, resulting into almost optimal performance \cite{Schniter04}. However, when these conditions do not hold, maximum average-SINR technique cannot be employed in a straightforward manner. To overcome this limitation, we propose to estimate the channel statistics in an adaptive manner, on an OFDM block basis. However, the direct solution of the resulting maximum SINR problem would require cubic computational complexity order over the number of the subcarriers. On this premise, we propose a novel \textit{adaptive windowing} technique which is able to track the variations of the channel statistics offering at the same time low-complexity requirements. Moreover, the proposed technique exhibits enhanced robustness over the channel modeling errors.

\textit{Notation:} $\mathcal{E}\{\cdot\}, \mathcal{D}(\cdot), \mathcal{C}(\cdot)$ denote the statistical mean, the diagonal matrix and the circulant matrix of the argument respectively; $\mathcal{N}(0, \sigma^2)$ denotes the zero-mean additive complex white Gaussian distribution; $[X]_{i,j}$ denotes $(i,j)$-th element of the matrix $\mathbf{X}$; $\circ$ denotes the Hadamard (element-wise) product, $\mathbf{T}_K$ is a matrix with lower and upper bandwidth $K/2$ and all ones within its band; $\mathbf{F}$ denotes the discrete Fourier transform (DFT) matrix; $\mathbf{I}_N$ denotes the $N \times N$ identity matrix.

\section{Problem Statement}

In an OFDM transmitter, the frequency-domain data stream is divided into blocks of length $N$ and modulated by $N$-point inverse DFT. At the receiver, the received blocks are demodulated by $N$-point DFT. Assuming time and frequency synchronization, and employing a cyclic prefix length greater than the maximum delay spread of the channel, the input-output relation for each OFDM block can be described as
\begin{equation}\label{eq:io}
\mathbf{y}= \mathbf{F}\mathbf{H}_t \mathbf{F}^H = \mathbf{H} \mathbf{x} + \mathbf{z}
\end{equation}
where $\mathbf{x}$ and $\mathbf{y}$ are the $N \times 1$ transmitted and received symbol vectors of each OFDM block, respectively, $\mathcal{E}\{ \mathbf{x}\mathbf{x}^H \} = \mathbf{I}_N$, $\mathbf{H}_t$ and $\mathbf{H}$ denote the channel convolution $N \times N$ matrices at the time and the frequency domain respectively, and $\mathbf{z}$ denotes the $N \times 1$ additive complex white Gaussian noise (AWGN) vector with $\mathbf{z} \sim \mathcal{N}(0, \sigma_z^2 \mathbf{I}_N), i=1,\ldots,N$. In time-selective channels, $\mathbf{H}$ is typically a non-diagonal matrix whose off-diagonal elements are due to ICI. Let us denote the $K$-banded approximation of the frequency-domain channel matrix as $\mathbf{H}_{B}=\mathbf{T}_K \circ \mathbf{H}$, with $K$ non-zero elements at each row. Therefore, the banded MMSE-based soft-decision symbol vector is given by
\begin{equation}\label{eq:MMSE_equalizer}
\mathbf{\tilde{x}} = \mathbf{H}_B^H  \left( \mathbf{H}_B \mathbf{H}_B^H + \rho \mathbf{I}_N \right)^{-1} \mathbf{y}
\end{equation}
where $\rho = \frac{\sigma_z^2}{\epsilon}$ and $\epsilon$ is a regularization parameter in order to improve the condition number of the equalizer matrix. Applying a time-domain window $\mathbf{w}$ at the receiver, prior to the DFT operation of each OFDM block, we get the following output
\begin{equation}\label{eq:io_win}
\mathbf{y}_{\mathbf{w}} = \mathcal{C}(\mathbf{w}) \mathbf{y} = \mathbf{F}\mathcal{D}(\mathbf{w})\mathbf{y}.
\end{equation}
In this case, the MMSE-based soft-decision output is given by
\begin{equation}\label{eq:MMSE_equalizer_with_window}
\mathbf{\tilde{x}}_{\mathbf{w}} = \mathbf{H}_B^H  \left( \mathbf{H}_B \mathbf{H}_B^H + \rho \mathcal{C}(\mathbf{w})\mathcal{C}(\mathbf{w})^H \right)^{-1} \mathbf{y}_{\mathbf{w}}.
\end{equation}
The two optimal design criteria, i.e. the maximum SINR (Max-SINR) and the maximum average SINR (Max-average-SINR), correspond to the solution of the following two optimization problems, respectively \cite{DSilva02}
\begin{eqnarray}\label{eq:maxSINR}
\mathbf{w}^{\star} &=& arg \max_{\mathbf{w}} \frac{P_s}{P_{ni}} \\ 
\mathbf{\bar{w}}^{\star} &=& arg \max_{\mathbf{w}} \frac{\mathcal{E}\{ P_s \}}{\mathcal{E}\{ P_{ni} \}} \label{eq:maxavSINR}
\end{eqnarray}
where $P_s = \Vert \mathbf{T} \circ \left( \mathcal{C}(\mathbf{w}) \mathbf{H} \right) \Vert_{F}^2$ is the signal power and $P_{ni} = \Vert \mathbf{T}^c\circ \left( \mathcal{C}(\mathbf{w}) \mathbf{H}\right) \Vert_{F}^2 + \sigma_z^2 \Vert \mathcal{C}(\mathbf{w}) \Vert_{F}^2$ is the noise plus interference power.

\section{Proposed Windowing Technique}

As mentioned in the Introduction, a common assumption, that simplifies the whole problem, is that the channel remains stationary. In this case, the channel statistics can be obtained once for all the OFDM block transmissions. When there are no estimation errors, the maximum average SINR criterion \eqref{eq:maxavSINR} can perform identically to \eqref{eq:maxSINR},  \cite{Schniter04}. 

However in practice, there are several cases where the channels are quasi- or even non-stationary \cite{Molisch09}, leading to an erroneously estimated channel. A straightforward solution to this problem, would be first to estimate the unknown channel statistics in an adaptive manner, and then solve the problem \eqref{eq:maxavSINR}. Although this would result into a more robust technique, it would also had a prohibited complexity (i.e. $\mathcal{O}(N^3)$) if the update is performed at every OFDM block. To overcome these difficulties, we propose an \textit{adaptive technique} which exhibits an order of magnitude lower complexity than the straightforward solution, and, at the same time, it is able to track the optimal average window. 

First we derive an alternative formulation for the optimal window design \eqref{eq:maxSINR}, which will stand as a basis for the subsequent analysis, as well as ground-truth models for the performance evaluation in Section~\ref{sec:Evaluation}.

\begin{proposition}[Max-SINR criterion]\label{proposition:1} The problem in \eqref{eq:maxSINR} can be equivalently expressed as a generalized eigenvalue problem, as follows,
	\begin{equation}\label{eq:maxSINR2}
	\mathbf{w}^{\star} = arg \max_{\mathbf{w}} \frac{\mathbf{w}^H \mathbf{R} \mathbf{w} }{ \mathbf{w}^H \left( \mathbf{\Lambda} - \mathbf{R} \right) \mathbf{w} }
	\end{equation}
	where $\mathbf{R} = \sum_{n=1}^N \left\{ \mathcal{D}(\mathbf{F}_n) \mathbf{F}  \mathbf{H} \mathcal{D}(\mathbf{T}_n) \mathbf{H}^H  \mathbf{F} ^H  \mathcal{D}(\mathbf{F}_n) \right\} $ and $\mathbf{\Lambda}=\mathcal{D}(\mathbf{F}\mathbf{H}\mathbf{H}^H \mathbf{F}^H) + \sigma_z^2 \mathbf{I}_N$ is a diagonal matrix.
\end{proposition}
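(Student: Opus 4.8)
The plan is to rewrite the signal power $P_s$ and the total power $P_s+P_{ni}$ as Hermitian quadratic forms in $\mathbf{w}$, after which \eqref{eq:maxSINR} collapses to the maximization of the generalized Rayleigh quotient $\mathbf{w}^H\mathbf{R}\mathbf{w}/\mathbf{w}^H(\mathbf{\Lambda}-\mathbf{R})\mathbf{w}$, i.e. a generalized eigenvalue problem. Two elementary observations do the work. First, the windowing operator in \eqref{eq:io_win} is circulant, hence it is diagonalized by the DFT, $\mathcal{C}(\mathbf{w})=\mathbf{F}^H\mathcal{D}(\mathbf{w})\mathbf{F}$; in particular $\mathcal{C}(\mathbf{w})$, and therefore the effective channel $\mathcal{C}(\mathbf{w})\mathbf{H}$, depends \emph{linearly} on $\mathbf{w}$. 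Second, the band mask $\mathbf{T}$ and its complement $\mathbf{T}^c$ have $0/1$ entries and partition the index set, so $\Vert\mathbf{T}\circ\mathbf{A}\Vert_F^2+\Vert\mathbf{T}^c\circ\mathbf{A}\Vert_F^2=\Vert\mathbf{A}\Vert_F^2$ for every $\mathbf{A}$. The second observation immediately gives $P_s+P_{ni}=\Vert\mathcal{C}(\mathbf{w})\mathbf{H}\Vert_F^2+\sigma_z^2\Vert\mathcal{C}(\mathbf{w})\Vert_F^2$; then, using $\Vert\mathbf{B}\Vert_F^2=\mathrm{tr}(\mathbf{B}\mathbf{B}^H)$, the cyclic property of the trace and $\mathbf{F}\mathbf{F}^H=\mathbf{I}_N$, one finds $\Vert\mathcal{C}(\mathbf{w})\mathbf{H}\Vert_F^2=\mathbf{w}^H\mathcal{D}(\mathbf{F}\mathbf{H}\mathbf{H}^H\mathbf{F}^H)\mathbf{w}$ and $\Vert\mathcal{C}(\mathbf{w})\Vert_F^2=\mathbf{w}^H\mathbf{w}$, which add up to exactly $\mathbf{w}^H\mathbf{\Lambda}\mathbf{w}$.

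For the numerator I would split the masked Frobenius norm row by row, $P_s=\sum_{n=1}^{N}\Vert[\mathbf{T}]_{n,:}\circ[\mathcal{C}(\mathbf{w})\mathbf{H}]_{n,:}\Vert^2$, and express the $n$-th row of $\mathcal{C}(\mathbf{w})\mathbf{H}$ as a linear form in $\mathbf{w}$: from $\mathcal{C}(\mathbf{w})=\mathbf{F}^H\mathcal{D}(\mathbf{w})\mathbf{F}$ one gets $[\mathcal{C}(\mathbf{w})\mathbf{H}]_{n,:}=\mathbf{w}^T\mathcal{D}(\mathbf{F}_n)\,\mathbf{F}\mathbf{H}$, where $\mathbf{F}_n$ is the vector associated with the $n$-th row of $\mathbf{F}$ (up to conjugation, according to the DFT normalization of \eqref{eq:io_win}). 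Since the band matrix $\mathbf{T}$ is symmetric, masking this row by $[\mathbf{T}]_{n,:}$ is equivalent to right-multiplication by the $0/1$ diagonal $\mathcal{D}(\mathbf{T}_n)$ built from the $n$-th column of $\mathbf{T}$; squaring the norm and using $\mathcal{D}(\mathbf{T}_n)\mathcal{D}(\mathbf{T}_n)^H=\mathcal{D}(\mathbf{T}_n)$ turns the $n$-th summand into $\mathbf{w}^H\mathcal{D}(\mathbf{F}_n)\,\mathbf{F}\mathbf{H}\,\mathcal{D}(\mathbf{T}_n)\,\mathbf{H}^H\mathbf{F}^H\,\mathcal{D}(\mathbf{F}_n)\,\mathbf{w}$, and summing over $n$ yields $P_s=\mathbf{w}^H\mathbf{R}\mathbf{w}$. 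Combining the two parts, $P_{ni}=(P_s+P_{ni})-P_s=\mathbf{w}^H(\mathbf{\Lambda}-\mathbf{R})\mathbf{w}$, and this matrix is positive definite since $\mathbf{w}^H(\mathbf{\Lambda}-\mathbf{R})\mathbf{w}=P_{ni}\ge\sigma_z^2\Vert\mathbf{w}\Vert^2>0$ for nonzero $\mathbf{w}$; substituting both forms into \eqref{eq:maxSINR} produces \eqref{eq:maxSINR2}, whose maximizer is the principal generalized eigenvector of the Hermitian pencil $(\mathbf{R},\mathbf{\Lambda}-\mathbf{R})$ and whose optimal value is the corresponding generalized eigenvalue.

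The delicate part is the numerator manipulation — checking that the per-row masked norm really does collapse into a quadratic form sandwiched between two copies of $\mathcal{D}(\mathbf{F}_n)$. This rests on (a) the DFT-linearity of $\mathcal{C}(\mathbf{w})\mathbf{H}$ in $\mathbf{w}$, which factors the $n$-th row as $\mathbf{w}^T\mathcal{D}(\mathbf{F}_n)\mathbf{F}\mathbf{H}$, and (b) the symmetry of $\mathbf{T}$, which lets the row mask be absorbed into the single inner diagonal $\mathcal{D}(\mathbf{T}_n)$; the only point needing care is keeping the conjugations consistent with the paper's DFT/circulant convention. The trace identities behind the denominator and the final substitution are routine.
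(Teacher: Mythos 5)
Your proposal is correct and follows essentially the same route as the paper's Appendix A: the per-row splitting of the masked norm with the band mask absorbed into $\mathcal{D}(\mathbf{T}_n)$ is exactly the paper's $\mathbf{E}_{nn}$/trace argument giving $P_s=\mathbf{w}^H\mathbf{R}\mathbf{w}$, and the complement identity plus $\Vert\mathcal{C}(\mathbf{w})\mathbf{H}\Vert_F^2=\mathbf{w}^H\mathcal{D}(\cdot)\mathbf{w}$, $\Vert\mathcal{C}(\mathbf{w})\Vert_F^2=\mathbf{w}^H\mathbf{w}$ is how the paper handles the denominator. The conjugation/convention caveat you flag is real but harmless (the paper itself is loose about $\mathbf{F}$ versus $\mathbf{F}^H$ there), and your added positivity remark on $\mathbf{\Lambda}-\mathbf{R}$ is a small bonus beyond the paper's proof.
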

\begin{proof}
	c.f. Appendix A.
\end{proof}

To proceed further, let us introduce a generalization of maximum average SINR criterion \eqref{eq:maxavSINR}.
\begin{proposition}[Generalized Max-average-SINR]\label{proposition:2}
The optimal window design based on the maximum average SINR criterion with unknown channel statistics, can be obtained by first solving the following maximum eigenvalue problem,
\begin{equation}\label{eq:maxSINR4}
\mathbf{\bar{v}}^{\star} = arg \max_{\mathbf{v}} \frac{\mathbf{v}^H \left( \mathcal{E} \{ \mathbf{\Lambda}\}^{-1/2} \mathcal{E} \{\mathbf{R} \} \mathcal{E} \{\mathbf{\Lambda}\}^{-1/2} \right) \mathbf{v}}{\mathbf{v}^H \mathbf{v}}
\end{equation}
and then by substituting to the expression
\begin{equation}
  \mathbf{\bar{w}}^{\star} = \mathcal{E} \{ \mathbf{\Lambda}\}^{-1/2} \mathbf{\bar{v}}^{\star}
\end{equation}
\end{proposition}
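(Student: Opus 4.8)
The plan is to obtain \eqref{eq:maxSINR4} directly from Proposition~\ref{proposition:1} by taking expectations and then whitening the denominator of the resulting Rayleigh quotient. From the derivation underlying Proposition~\ref{proposition:1} (Appendix~A), the two powers admit the quadratic-form representations $P_s = \mathbf{w}^H \mathbf{R} \mathbf{w}$ and $P_{ni} = \mathbf{w}^H ( \mathbf{\Lambda} - \mathbf{R} ) \mathbf{w}$, in which the channel randomness is carried solely by $\mathbf{R}$ and $\mathbf{\Lambda}$ while the window $\mathbf{w}$ is deterministic. First I would invoke linearity of $\mathcal{E}\{\cdot\}$ to write $\mathcal{E}\{P_s\} = \mathbf{w}^H \mathcal{E}\{\mathbf{R}\} \mathbf{w}$ and $\mathcal{E}\{P_{ni}\} = \mathbf{w}^H ( \mathcal{E}\{\mathbf{\Lambda}\} - \mathcal{E}\{\mathbf{R}\} ) \mathbf{w}$, so that \eqref{eq:maxavSINR} is exactly the generalized Rayleigh quotient $\mathbf{\bar{w}}^{\star} = arg \max_{\mathbf{w}} ( \mathbf{w}^H \mathcal{E}\{\mathbf{R}\} \mathbf{w} ) / ( \mathbf{w}^H ( \mathcal{E}\{\mathbf{\Lambda}\} - \mathcal{E}\{\mathbf{R}\} ) \mathbf{w} )$.

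Next I would remove the subtraction in the denominator by a monotonicity argument. Setting $g(\mathbf{w}) = ( \mathbf{w}^H \mathcal{E}\{\mathbf{R}\} \mathbf{w} ) / ( \mathbf{w}^H \mathcal{E}\{\mathbf{\Lambda}\} \mathbf{w} ) = \mathcal{E}\{P_s\} / \mathcal{E}\{P_s + P_{ni}\}$, the objective above equals $g / (1-g)$. Because $P_{ni}\ge 0$ and, thanks to the additive term $\sigma_z^2 \Vert \mathcal{C}(\mathbf{w}) \Vert_{F}^2$, $\mathcal{E}\{P_{ni}\} > 0$ for every $\mathbf{w}\neq\mathbf{0}$, we have $g(\mathbf{w})\in[0,1)$, and the scalar map $t\mapsto t/(1-t)$ is strictly increasing on $[0,1)$. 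Hence the two problems have the same set of maximizers, and $\mathbf{\bar{w}}^{\star}$ also maximizes the simpler ratio $( \mathbf{w}^H \mathcal{E}\{\mathbf{R}\} \mathbf{w} ) / ( \mathbf{w}^H \mathcal{E}\{\mathbf{\Lambda}\} \mathbf{w} )$.

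Finally I would whiten. The matrix $\mathcal{E}\{\mathbf{\Lambda}\}$ is diagonal with strictly positive entries (the term $\sigma_z^2 \mathbf{I}_N$ alone makes it positive definite), so $\mathcal{E}\{\mathbf{\Lambda}\}^{1/2}$ and $\mathcal{E}\{\mathbf{\Lambda}\}^{-1/2}$ are well defined, and the change of variable $\mathbf{v} = \mathcal{E}\{\mathbf{\Lambda}\}^{1/2}\mathbf{w}$, i.e. $\mathbf{w} = \mathcal{E}\{\mathbf{\Lambda}\}^{-1/2}\mathbf{v}$, is a bijection on $\mathbb{C}^N\setminus\{\mathbf{0}\}$ that leaves the ratio invariant. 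Under it $g$ becomes $( \mathbf{v}^H ( \mathcal{E}\{\mathbf{\Lambda}\}^{-1/2} \mathcal{E}\{\mathbf{R}\} \mathcal{E}\{\mathbf{\Lambda}\}^{-1/2} ) \mathbf{v} ) / ( \mathbf{v}^H \mathbf{v} )$, precisely the quotient in \eqref{eq:maxSINR4}, a standard Hermitian Rayleigh quotient whose maximizer $\mathbf{\bar{v}}^{\star}$ is the dominant eigenvector of $\mathcal{E}\{\mathbf{\Lambda}\}^{-1/2} \mathcal{E}\{\mathbf{R}\} \mathcal{E}\{\mathbf{\Lambda}\}^{-1/2}$. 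Undoing the substitution gives $\mathbf{\bar{w}}^{\star} = \mathcal{E}\{\mathbf{\Lambda}\}^{-1/2}\mathbf{\bar{v}}^{\star}$, which is therefore a maximizer of \eqref{eq:maxavSINR}, unique up to an immaterial nonzero scaling of the window. Each step is routine once Proposition~\ref{proposition:1} is available; the only points that genuinely need care — and hence the main, albeit minor, obstacle — are justifying that the expectation may be pulled inside the quadratic forms and that $\mathcal{E}\{\mathbf{\Lambda}\}$ is positive definite, since these are what legitimize both the monotonicity reduction and the whitening transformation.
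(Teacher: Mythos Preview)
Your proposal is correct and follows essentially the same route as the paper: both rely on the quadratic-form representation $P_s=\mathbf{w}^H\mathbf{R}\mathbf{w}$, $P_{ni}=\mathbf{w}^H(\mathbf{\Lambda}-\mathbf{R})\mathbf{w}$ from Proposition~\ref{proposition:1}, the monotonicity of $\eta\mapsto\eta/(1+\eta)$ (your $t\mapsto t/(1-t)$ is the same map in the reciprocal parameterization), and the whitening substitution $\mathbf{v}=\mathcal{E}\{\mathbf{\Lambda}\}^{1/2}\mathbf{w}$. The only difference is ordering: you take expectations at the quadratic-form level and then reduce, whereas the paper first writes the generalized eigenvalue equation, reduces it, and then replaces $\mathbf{R},\mathbf{\Lambda}$ by their means; your ordering is slightly cleaner since pulling $\mathcal{E}\{\cdot\}$ through a quadratic form in a deterministic $\mathbf{w}$ is immediate, while ``taking the statistical mean of the involved matrices'' in an eigenvalue identity requires exactly the justification you supply.
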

\begin{proof}
c.f. Appendix B.
\end{proof}
Considering that the CIR statistics is unknown, we proceed by approximating the correlation matrices $\mathcal{E} \{ \mathbf{\Lambda}\}$ and $\mathcal{E} \{ \mathbf{R}\}$ with the following sample-based expressions,
\begin{eqnarray}
\mathcal{E} \{ \mathbf{R} \} &\approx& \mathbf{\bar{R}}(m) \triangleq \frac{1}{m} \sum_{k=1}^m \lambda^{m-k} \mathbf{R}(m) \\
\mathcal{E} \{ \mathbf{\Lambda} \} &\approx& \mathbf{\bar{\Lambda}}(m) \triangleq  \frac{1}{m} \sum_{k=1}^m \lambda^{m-k} \mathbf{\Lambda}(m)
\end{eqnarray}
where $m$ is the OFDM block index. The sample sequences are defined as
\begin{equation}\label{eq:Rm}
\mathbf{R}(m) = \sum_{n=1}^N \left\{ \mathcal{D}(\mathbf{F}_n) \mathbf{F}  \mathbf{\tilde{H}}(m) \mathcal{D}(\mathbf{T}_n) \mathbf{\tilde{H}}^H(m) \mathbf{F} ^H  \mathcal{D}(\mathbf{F}_n) \right\}
\end{equation}
where the estimated channel matrix for the $m$-th OFDM block is denoted by  $\mathbf{\tilde{H}}(m)$, and
\begin{equation}\label{eq:Lambdam}
\mathbf{\Lambda}(m)=\mathcal{D}(\mathbf{F}\mathbf{\tilde{H}}(m)\mathbf{\tilde{H}}^H(m) \mathbf{F}^H) + \sigma_z^2 \mathbf{I}_N
\end{equation}
respectively, for $m=1, 2, \ldots, N$.

Note that, based on \eqref{eq:Rm} and \eqref{eq:Lambdam}, we can easily get update for the correlation matrices at each OFDM block $m$, and therefore the windowing filter based on the maximum average SINR criterion must be updated per block basis. This fact results into an increased computational cost, i.e. $\mathcal{O}(N^3)$. In order to overcome this, an iterative technique can be used for the update of the dominant eigenvector, which eventually, will converge to the desired eigenvector. In the following part of this subsection, we provide a description of the proposed adaptive algorithm.

Let us make the assumption that the correlation matrices $\mathbf{\bar{\Lambda}}(m)$ and $\mathbf{\bar{R}}(m)$ are stochastic, with $\mathcal{E}\{ \mathbf{\bar{\Lambda}}(m) \} = \mathcal{E}\{ \mathbf{\Lambda} \}$ and $\mathcal{E}\{ \mathbf{\bar{R}}(m) \} = \mathcal{E}\{ \mathbf{R} \}$ for all $m$. In this case, in order to update the estimation of the dominant eigenvector for the matrix $\mathbf{Q}(m) \triangleq \left( \mathbf{\bar{\Lambda}}^{-1/2}(m)\mathbf{\bar{R}}(m)  \mathbf{\bar{\Lambda}}^{-1/2}(m) \right)$, an iterative algorithm can be employed. Among the many available algorithms on this topic, we have chosen the algorithm in \cite{Oja85} since it provides strong performance guarantees. In particular, the authors in \cite{Oja85} have proposed a two-step iterative algorithm, which in our case can be expressed as follows,
\begin{eqnarray}\label{eq:oja_per(m)}
\mathbf{\bar{v}}(m) &\leftarrow& \mathbf{\bar{v}}(m-1) + \gamma_m \mathbf{Q}(m) \mathbf{\bar{v}}(m-1) \\
\mathbf{\bar{v}}(m) &\leftarrow& \frac{\mathbf{\bar{v}}(m)}{\Vert \mathbf{\bar{v}}(m) \Vert}
\end{eqnarray}
where $\mathbf{\bar{v}}(m)$ is the unknown dominant eigenvector and $\gamma_m$ is the step-size parameter. Once we have obtained the current block update of the dominant eigenvector $\mathbf{\bar{v}}(m)$, the optimal window can be computed by
\begin{equation}
\mathbf{\bar{w}}(m) = \mathbf{\bar{\Lambda}}^{-1/2}(m) \mathbf{\bar{v}}(m).
\end{equation}
When the channel is stationary, it is expected that after a sufficient number of OFDM blocks the vector $\mathbf{\bar{v}}(m)$ converges to the dominant eigenvector of $\mathcal{E} \{ \mathbf{\Lambda} \}^{-1/2} \mathcal{E} \{ \mathbf{R} \} \mathcal{E} \{ \mathbf{\Lambda} \}^{-1/2}$. For non-stationary channels, the parameter $\lambda$ can be set accordingly in order to track the potential variations of the channel statistics.

Note that, the step-size parameter $\gamma_m$ determines the convergence behavior of the adaptive algorithm. According to \cite{Oja85}, the convergence of the algorithm is guaranteed given that the following properties are satisfied, 
\begin{equation}\label{eq:gamma_conditions}
\gamma_m \ge 0, \sum_m \gamma_m^2 < \infty, \sum_m \gamma_m = \infty.
\end{equation}

The proposed adaptive windowing technique is summarized in Algorithm 1. The lines 1-9 are for the update of the sample correlation matrices, while the lines 10-14 are for the update of the dominant eigenvector, i.e. the windowing filter. 

\textit{Remark 1:} The expressions in eqs. \eqref{eq:Rm} and \eqref{eq:Lambdam} require the estimated CSI, as it is the case with the classical method based maximum SINR criterion. However in our case, the matrix $\mathbf{\tilde{H}}$ in eqs. \eqref{eq:Rm} and \eqref{eq:Lambdam}, is the channel correlation matrix in the frequency domain, and potentially, it can be estimated in a \textit{blind manner}, e.g. \cite{Moustakides06}, thus avoiding the costly operation of channel estimation. 

\textit{Remark 2:} For $K \ll N$, the complexity order of the proposed algorithm (Algorithm 1) is $\mathcal{O}(N^2)$. Although, the complexity cost of the proposed algorithm remains quadratic, it is an order of magnitude lower than that of the straightforward technique; thus the proposed technique could be beneficial for mobile applications with energy constraints  (i.e. Vehicle-to-Vehicle \cite{802.11p}).

\begin{algorithm}
\caption{Adaptive Windowing Technique}
\begin{algorithmic}[1]
\FOR {$m=1, 2, \ldots$}
\STATE \COMMENT{Update the sample correlation matrices}
\STATE $\mathbf{C}(m) \leftarrow  \mathbf{F} \mathbf{\tilde{H}}(m)$
\FOR  {$n=1, \ldots, N$}
\STATE $\mathbf{B}_n(m) \leftarrow \lambda \mathbf{B}(m-1) + \mathbf{C}(m) \mathcal{D}(\mathbf{T}_n) \mathbf{C}^H(m)$
\ENDFOR
\STATE $\mathbf{\bar{\Lambda}}(m) \leftarrow \lambda  \mathbf{\bar{\Lambda}}(m-1) + \mathcal{D}(\mathbf{C}(m)\mathbf{C}^H(m)) + \sigma_z^2 \mathbf{I}_N$
\STATE $\mathbf{\bar{R}}(m) \leftarrow \sum_{n=1}^N \mathcal{D}(\mathbf{F}_n) \mathbf{B}_n(m)\mathcal{D}(\mathbf{F}_n)$ 
\STATE $\mathbf{Q}(m) \leftarrow \left(\mathbf{\bar{\Lambda}}^{-1/2}(m)\mathbf{\bar{R}}(m)  \mathbf{\bar{\Lambda}}^{-1/2}(m)\right)$
\STATE \COMMENT{Update the dominant eigenvector}
\STATE $\mathbf{\bar{v}}(m) \leftarrow \mathbf{\bar{v}}(m-1) + \gamma_m \mathbf{Q}(m) \mathbf{\bar{v}}(m-1)$
\STATE $\mathbf{\bar{v}}(m) \leftarrow \frac{\mathbf{\bar{v}}(m)}{\Vert \mathbf{\bar{v}}(m) \Vert}$
\STATE $\mathbf{\bar{w}}(m) = \mathbf{\bar{\Lambda}}^{-1/2}(m) \mathbf{\bar{v}}(m)$
\ENDFOR
\end{algorithmic}
\end{algorithm}

\section{Performance Evaluation}\label{sec:Evaluation}

To evaluate the performance of the proposed technique, we consider an uncoded OFDM system with $N=16$ and QPSK constellation. The channel is modeled using tapped-delay-line model with three paths ($L=3$) and an exponential power delay profile. The path gain for each channel tap is independently generated from the Jakes' model \cite{Jakes}. The symbol estimation is obtained by the banded MMSE equalizer which is given by \eqref{eq:MMSE_equalizer}, with band size $K=3$, and regularization parameter $\epsilon=0.1$.

First, we evaluate the performance of the proposed method in terms of symbol-error-rate (SER) versus the signal-to-noise-ratio (SNR). We compare the performance of the proposed adaptive windowing technique with the following techniques: the Max-SINR and the Max-average-SINR based techniques \cite{Schniter04}, and the ground-truth version of the proposed technique, where the eigenvector is computed for each OFDM block via singular value decomposition. The parameter $\gamma_m$ of the proposed adaptive algorithm has been set equal to $\gamma_m = \frac{N^4}{m}$, which satisfies the conditions in \eqref{eq:gamma_conditions}.

Fig.~\ref{fig:spl_testcase_ber_snr} shows the SER performance comparison for the aforementioned techniques averaged over 15000 OFDM blocks, when the maximum normalized Doppler spread is $f_D=0.01$. Two cases for the channel state information are shown in Fig.~\ref{fig:spl_testcase_ber_snr}, i.e., perfect CSI (left figure) and imperfect CSI (right figure). For the imperfect case, the channel estimation errors were modeled according to $\mathbf{\tilde{H}}(m) =  \mathbf{H}(m) + \mathbf{O}(m)$, where $\mathbf{O}(m)$ is a $N \times N$ matrix with $[O]_{i,j}(m) \sim \mathcal{N}(0, \sigma_o^2)$. Moreover, we have considered erroneous estimation for the maximum Doppler frequency, i.e. $\tilde{f}_{d} =f_d + e$, where $e \sim \mathcal{U}(0, 0.01)$. 

We can observe that, for the perfect CSI case, all the three windowing techniques exhibit the same SER performance. On the contrary, for the imperfect case, only the proposed adaptive algorithm remains unaffected, while the other techniques exhibit a high error floor. Note that, the performance of the proposed technique is obtained after the algorithm reached the steady-state.

\begin{figure}[!t]
\centering
\includegraphics[width=3.8in]{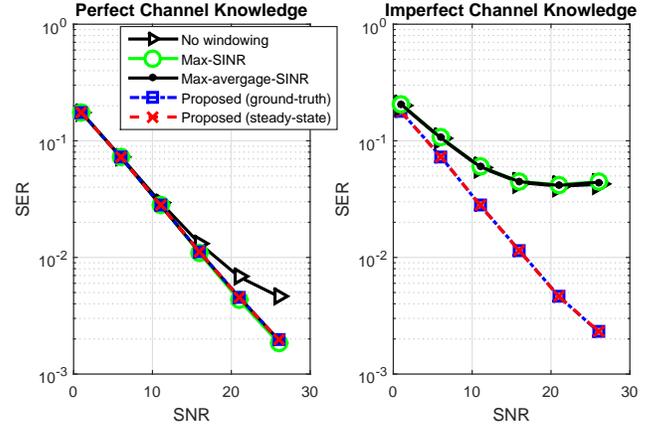}
\caption{Performance evaluation of the proposed technique in terms of SER.}
\label{fig:spl_testcase_ber_snr}
\vspace{-0.1in}
\end{figure}

In Fig.~\ref{fig:convergence}, we show the learning curve of the mean-square-error (MSE) with respect to the number of the OFDM blocks, averaged over 1000 Monte-Carlo realizations. In this case, in order to verify the convergence of the proposed adaptive technique, we have set a stationary channel, while the forgetting factor was set to $\lambda = 0.999$.
We can observe that the convergence speed of the proposed adaptive algorithm remains steady with respect to the SNR regime.

\begin{figure}[!t]
	\centering
	\includegraphics[width=3.8in]{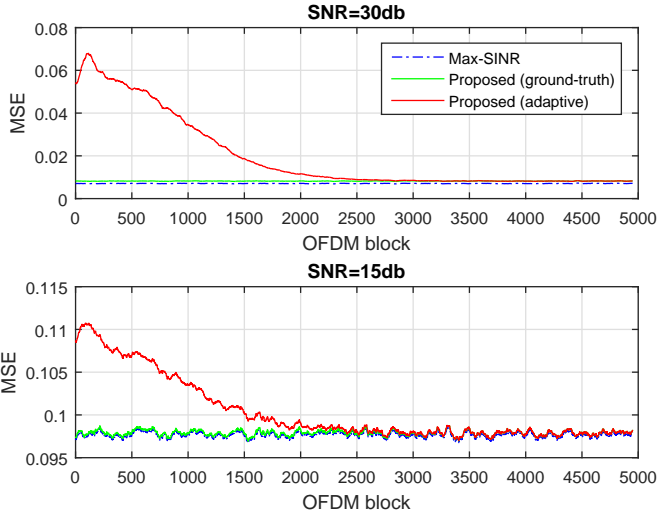}
	\caption{MSE learning curve of the proposed adaptive technique for two SNR regimes; $15dB$ for the medium regime and $30dB$ for the high regime.}
	\label{fig:convergence}
	\vspace{-0.1in}
\end{figure}

Fig.~\ref{fig:tracking} shows the tracking capabilities of the proposed adaptive technique. In our scenario, we assume that after 6000 OFDM blocks a sudden change occurs at the channel parameters, i.e. the maximum Doppler spread increases from 0.001 to 0.005. It can be seen that after 2000 OFDM blocks, the proposed adaptive algorithm converges to the optimal steady-state bound. Note that, the forgetting factor was been set to $\lambda = 0.98$, while $SNR=30dB$. 

\begin{figure}[!t]
\centering
\includegraphics[width=3.8in]{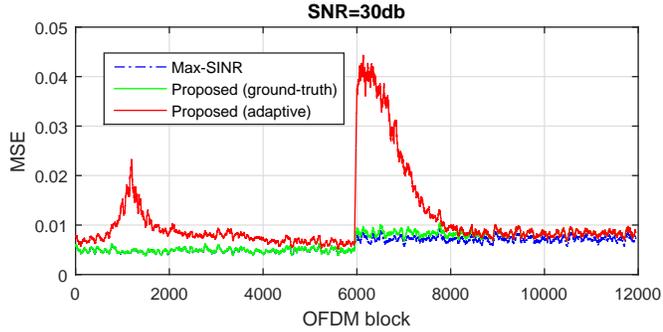}
\caption{MSE tracking curve.}
\label{fig:tracking}
\vspace{-0.1in}
\end{figure}

%Specifically, the key point here is that the involved channel matrix $\mathbf{H}$ refers specifically to the convolution channel matrix. This property may give us the option of \textit{blind estimation} of the matrix $\mathbf{R}$, and thus avoiding the estimation of the channel coefficients.

%If the  channel statistics do not vary over time, then the algorithm will converge to the maximum eigenvector of the matrix $\mathbf{Q}$. On the other hand, if the channel statistics vary over time, then the proposed algorithm is able to track the varying maximum eigenvector.

\section{Conclusion}
In this letter, a novel adaptive technique for enhancing the banded structure of the OFDM channel matrix has been proposed. We have considered the cases of unknown and erroneous channel statistics, where the conventional methods fail to operate effectively. While the straightforward optimal solution to this problem requires high complexity, the proposed adaptive algorithm is able to track the optimal window requiring an order of magnitude lower complexity.

% if have a single appendix:
\section*{Appendix}

\subsection{Derivation of Proposition~\ref{proposition:1}}

Let us first consider the numerator of \eqref{eq:maxSINR}, which can be expressed as follows \cite[page 110, eq. (3.6.1)]{Johnson89},
$
\Vert \mathbf{T} \circ \left( \mathcal{C}(\mathbf{w}) \mathbf{H} \right) \Vert_{F}^2 = \left\Vert \sum_{n=1}^N \mathbf{E}_{nn} \mathcal{C}(\mathbf{w}) \mathbf{H} \mathcal{D}(\mathbf{T}_n) \right\Vert_{F}^2
$, where $\mathbf{E}_{nn}$ is a matrix with one at the $n$-th row and $n$-th column and zeros elsewhere. This special matrix has the following property
$\mathbf{E}_{nn}\mathbf{E}_{mm} = \left\{\begin{array}{c}
\mathbf{E}_{nn}, n=m \\
\mathbf{0}_N, n \neq m
\end{array}\right.$
and thus it can be shown that
\begin{equation}
\left\Vert \sum_{n=1}^N \mathbf{E}_{nn} \mathcal{C}(\mathbf{w}) \mathbf{H} \mathcal{D}(\mathbf{T}_n) \right\Vert_{F}^2 = \sum_{n=1}^N \Vert \mathbf{E}_{nn} \mathcal{C}(\mathbf{w}) \mathbf{H} \mathcal{D}(\mathbf{T}_n) \Vert_{F}^2.
\end{equation}
Then, based on properties of the Frobenius norm
\cite{Golub96}, we have that $\Vert \mathbf{T} \circ \left( \mathcal{C}(\mathbf{w}) \mathbf{H} \right) \Vert_{F}^2 = \sum_{n=1}^N tr\left(\mathbf{E}_{nn} \mathcal{C}(\mathbf{w}) \mathbf{H} \mathcal{D}(\mathbf{T}_n)\mathcal{D}(\mathbf{T}_n))^H \mathbf{H}^H \mathcal{C}(\mathbf{w})^H \mathbf{E}_{nn}^H \right)
$. Since $\mathcal{C}(\mathbf{w}) = \mathbf{F}\mathcal{D}(\mathbf{w}) \mathbf{F}^H$, it is straightforward to show that $\mathbf{E}_{nn} \mathcal{C}(\mathbf{w})=\mathbf{E}_{nn}  \mathbf{F}\mathcal{D}(\mathbf{w}) \mathbf{F}^H= \mathbf{J}_n(\mathbf{w}) \mathcal{D}(\mathbf{F}_n)\mathbf{F}^H$, where $\mathbf{J}_n(\mathbf{w})$ is a matrix with zero rows except for the $n$-th row which is equal to $\mathbf{w}$, and $\mathcal{D}(\mathbf{F}_n)$ is the diagonal matrix whose diagonal equals the $n$-th row of the DFT matrix. Moreover, it is true that
$tr(\mathbf{J}_n(\mathbf{w}) \mathbf{X} \mathbf{J}_n(\mathbf{w})^H) = \mathbf{w}^H \mathbf{X} \mathbf{w}$. Therefore, we end up with the following expression
\begin{equation}
\Vert \mathbf{T} \circ \left( \mathcal{C}(\mathbf{w}) \mathbf{H} \right) \Vert_{F}^2 
= \mathbf{w}^H \mathbf{R} \mathbf{w} \label{eq:wRw}
\end{equation}
where $\mathbf{R} = \sum_{n=1}^N \mathcal{D}(\mathbf{F}_n) \mathbf{C} \mathbf{F}^H \mathbf{\Lambda}(\mathbf{T}_n) \mathbf{F}  \mathbf{C}^H \mathcal{D}(\mathbf{F}_n)$.
Considering now the denumerator of \eqref{eq:maxSINR}, recall that it is decomposed into the interference and the AWGN terms. The interference energy term can be written as
\begin{equation}\label{eq:denomof6}
\Vert \mathbf{T}^c\circ \left( \mathcal{C}(\mathbf{w}) \mathbf{H}\right) \Vert_{F}^2 = \Vert \mathcal{C}(\mathbf{w}) \mathbf{H} \Vert_{F}^2 - \Vert \mathbf{T} \circ \left( \mathcal{C}(\mathbf{w}) \mathbf{H} \right) \Vert_{F}^2
\end{equation}
where
\begin{equation}
\Vert \mathcal{C}(\mathbf{w}) \mathbf{H} \Vert_{F}^2 
= \Vert \mathbf{F}\mathcal{D}(\mathbf{w}) \mathbf{F}^H \mathbf{F} \mathbf{C} \mathbf{F}^H \Vert_{F}^2 
=  \mathbf{w}^H \mathcal{D}(\mathbf{C} \mathbf{C}^H) \mathbf{w} \label{eq:wDCCw}
\end{equation}
By combining \eqref{eq:wRw} and \eqref{eq:wDCCw} into \eqref{eq:denomof6} we have that
\begin{equation}
\Vert \mathbf{T}^c\circ \left( \mathcal{C}(\mathbf{w}) \mathbf{H}\right) \Vert_{F}^2 = \mathbf{w}^H \left( \mathcal{D}(\mathbf{C}\mathbf{C}^H) - \mathbf{R} \right) \mathbf{w}
\end{equation}
Finally, by using that $\Vert \mathcal{C}(\mathbf{w}) \Vert_{F}^2 = \mathbf{w}^H \mathbf{w}$, we have that $\Vert \mathbf{T}^c\circ \left( \mathcal{C}(\mathbf{w}) \mathbf{H}\right) \Vert_{F}^2 + \sigma_z^2 \Vert \mathcal{C}(\mathbf{w}) \Vert_{F}^2 = \mathbf{w}^H \left(\sigma_z^2 \mathbf{I}_N + \mathcal{D}(\mathbf{C}\mathbf{C}^H) - \mathbf{R} \right) \mathbf{w}$.

\subsection{Proof of Proposition~\ref{proposition:2}}

Eq.~\eqref{eq:maxSINR2} can be written as
\begin{eqnarray}\label{eq:pencil}
\mathbf{R}\mathbf{w}^{\star} &=& \eta_m (\mathbf{\Lambda} - \mathbf{R}) \mathbf{w}^{\star} \\
\Rightarrow \mathbf{\Lambda}^{-1/2} \mathbf{R} \mathbf{\Lambda}^{-1/2} \mathbf{v}^{\star} &=& \kappa_m \mathbf{v}^{\star}
\end{eqnarray}
where $\kappa_m = \frac{\eta_m}{1+\eta_m}$, where $\eta_m$ is the maximum eigenvalue. Note that the matrices $\mathbf{R}$ and $\mathbf{\Lambda}-\mathbf{R}$ are positive semi-definite, since they can be written as Gram matrices, i.e.
\begin{eqnarray}
\mathbf{R} = \sum_{n=1}^N \mathcal{D}(\mathbf{F}_n) \mathbf{F} \mathbf{H} \mathcal{D}(\mathbf{T}_n) \mathbf{H}^H \mathbf{F}^H \mathcal{D}(\mathbf{F}_n)^H = \mathbf{U}^H \mathbf{U}
\end{eqnarray}
and $  \mathbf{\Lambda} - \mathbf{R} = \left[ \begin{matrix} \left( \mathbf{\Lambda}^{1/2} \right)^H & j \mathbf{U}^H \end{matrix} \right] \left[ \begin{matrix} \mathbf{\Lambda}^{1/2} \\ j \mathbf{U} \end{matrix} \right]$. Therefore, $\lambda_{max} \ge 0$ and the function $f(\eta)=\frac{\eta}{1+\eta}$ is strictly increasing, and thus the eigenvector of the $\kappa_m$-th eigenvalue corresponds to the eigenvector of $\eta_m$. Taking the statistical mean of the involved matrices in \eqref{eq:pencil} we have
\begin{align*}
& \mathcal{E} \{ \mathbf{R} \} \mathbf{w}^{\star} = \eta_m' \mathcal{E} \{ (\mathbf{\Lambda} - \mathbf{R}) \} \mathbf{w}^{\star} \\
\Rightarrow  & \mathcal{E}\{ \mathbf{\Lambda}\}^{-1/2} \mathcal{E}\{\mathbf{R}\} \mathcal{E}\{\mathbf{\Lambda}\}^{-1/2} \mathcal{E}\{\mathbf{\Lambda}\}^{1/2} \mathbf{w}^{\star} = \kappa_m' \mathcal{E}\{\mathbf{\Lambda}\}^{1/2} \mathbf{w}^{\star} \\
\Rightarrow & \mathcal{E} \{\mathbf{\Lambda}\}^{-1/2} \mathcal{E} \{\mathbf{R}\} \mathcal{E} \{\mathbf{\Lambda}\}^{-1/2} \mathbf{v}^{\star} = \kappa_m' \mathbf{v}^{\star}
\end{align*}
which results in \eqref{eq:maxSINR4}.

% or
%\appendix  % for no appendix heading
% do not use \section anymore after \appendix, only \section*
% is possibly needed

% use appendices with more than one appendix
% then use \section to start each appendix
% you must declare a \section before using any
% \subsection or using \label (\appendices by itself
% starts a section numbered zero.)
%

%\section*{Acknowledgment}
%The authors would like to thank...

% Can use something like this to put references on a page
% by themselves when using endfloat and the captionsoff option.
\ifCLASSOPTIONcaptionsoff
  \newpage
\fi

% trigger a \newpage just before the given reference
% number - used to balance the columns on the last page
% adjust value as needed - may need to be readjusted if
% the document is modified later
%\IEEEtriggeratref{8}
% The "triggered" command can be changed if desired:
%\IEEEtriggercmd{\enlargethispage{-5in}}

% references section

% can use a bibliography generated by BibTeX as a .bbl file
% BibTeX documentation can be easily obtained at:
% http://www.ctan.org/tex-archive/biblio/bibtex/contrib/doc/
% The IEEEtran BibTeX style support page is at:
% http://www.michaelshell.org/tex/ieeetran/bibtex/
%\bibliographystyle{IEEEtran}
% argument is your BibTeX string definitions and bibliography database(s)
%\bibliography{IEEEabrv,../bib/paper}
%
% <OR> manually copy in the resultant .bbl file
% set second argument of \begin to the number of references
% (used to reserve space for the reference number labels box)

\newpage
\bibliographystyle{IEEEtran}
\bibliography{thebibliography}

\end{document}